\newtheorem{lem}{lemma1}
\begin{document}
%
\title{An Exclusion zone for Massive MIMO With Underlay  D2D Communication }

\author{\IEEEauthorblockN{Salah Eddine Hajri and Mohamad Assaad}
\IEEEauthorblockA{Laboratoire des Signaux et Systemes (L2S, CNRS UMR8506), CentraleSupelec \\
Gif-sur-Yvette, France\\
Email: \{Salaheddine.hajri,\; Mohamad.Assaad\}@centralesupelec.fr}

}


%


\maketitle

\begin{abstract}
Fifth generation networks will incorporate a variety of new features in wireless networks such as data offloading, D2D communication, and Massive MIMO. 
Massive MIMO is specially appealing since it achieves huge gains while enabling simple processing like MRC receivers. It suffers, though, from a major shortcoming refereed to as pilot contamination.
 In this paper we propose a frame-work in which, a D2D underlaid Massive MIMO system is implemented  and we will prove that this scheme can reduce the pilot contamination problem while enabling an optimization of the system spectral efficiency.
The D2D communication will help maintain the network coverage while allowing a better channel estimation to be performed.

\end{abstract}


%
\IEEEpeerreviewmaketitle

\section{Introduction}

The rapid increase in smart-phones number  allowed the emergence of new services that although, enriching for the user experience, lead to an exponential growth in the traffic volume over the wireless networks. Since the network resources especially the  used spectrum, are limited, techniques like Massive MIMO, D2D communication, data offloading and small cell networks were proposed in order to  provide the necessary increase in network capacity for future generation networks.
In addition to traffic volume increase, the issue of energy efficiency of the network has to be dealt with.
To this end two of the most promising techniques are Massive MIMO and D2D underlaid network.

Multiuser Massive MIMO systems were intensively investigated and it was proven that with simple processing this systems allow an increase in network capacity with the possibility of a more energy efficient network. Nevertheless, Massive MIMO systems are no without shortcomings. 
In fact while the large number of antennas mitigate the effect of uncorrelated noise and fast fading \cite{Noncooperative}, the interference caused by pilot reuse persist\cite{Noncooperative}. The pilot contamination problem was addressed several times, in \cite{blind} a subspace projection method was used to mitigate the pilot contamination problem.In \cite{pilots}, the authors proposed a covariance-aided channel estimation framework and a coordinated pilot assignment strategy in order to reduce pilot contamination impact.
Another promising technology is Device to Device communication. This technology allows the user equipment to communicate directly using the cellular network resources instead of going through the cellular infrastructure. 

With enabled D2D communication,  proximity-based services  become possible \cite{overview}. This proximity enable an increase in spectral and energy efficiency. It make also more efficient data offloading  schemes possible. 
Combining D2D communication and Massive MIMO, has been proposed in \cite{Gesbert1} and \cite{Gesbert2} for MIMO networks in FDD mode and has proven to enable more efficient forms of feedback.
In this paper, we investigate the coexistence of the two technologies in TDD mode. In order to reduce the pilot contamination phenomenon, we consider in addition an exclusion zone framework and develop an optimization framework that determines the best exclusion zone size and power allocation.

\section{ Contribution and outcomes}
The contributions of our work are summarized as follows:
\begin{enumerate}
\item A base station controlled mode selection scheme:\\
In this paper we propose a mode selection scheme  based on the user's position. This scheme divide the cell into  two parts where  cellular mode and D2D mode are not allowed to coexist\cite{Guard}. The Base station will fix the radius of the exclusion zone in a scalable way based on the user density, user transmit power, average cellular SINR , and average  interference on D2D links.

\item Impact of the mode selection on Pilot contamination:\\
In this paper we consider a realistic framework in which the Base station uses Pilot-based CSI estimation only for cellular users.
This enable us to assess the impact of our proposed mode selection scheme on a major shortcoming of Massive MIMO systems which is Pilot contamination. We find that the mean square error of channel estimation due to Pilot reuse is reduced in our system but vary depending on whether D2D communication is allowed or not during training phase.
\item Average SINR  Optimization:\\
We develop the analytical expressions of the average SINR for a reference cellular user and the average interference on a reference D2D link. Using this expressions we introduce a quasi-concave optimization problem  in which cellular user power and the exclusion region radius are  defined so that the average cellular SINR is maximized while an acceptable interference level on the D2D communication is guaranteed. By using the average interference and average SINR  we are able to perform optimization  without the need for full CSI of all users. This is quite practical in a dense system like the one considered in this paper.
 \end{enumerate}

\section{ SYSTEM MODEL AND PRELIMINARIES}

In this paper we consider  a Multi-user  multi cell MIMO system with a large excess of base
station antennas operating in  time-division duplex (TDD) mode  in which the   coherence time of  the channel  is divided between a training and a data transmission phase.
The base station will use channel reciprocity and a mutually orthogonal set of pilots in order to estimate the channel  of each scheduled cellular user.

We consider also that in addition to the cellular users the network is  underlaid with D2D terminals using the same  frequency spectrum as the cellular users.
 \begin{figure}[h]
	\hspace{-2.5em}
		\includegraphics[width=10cm]{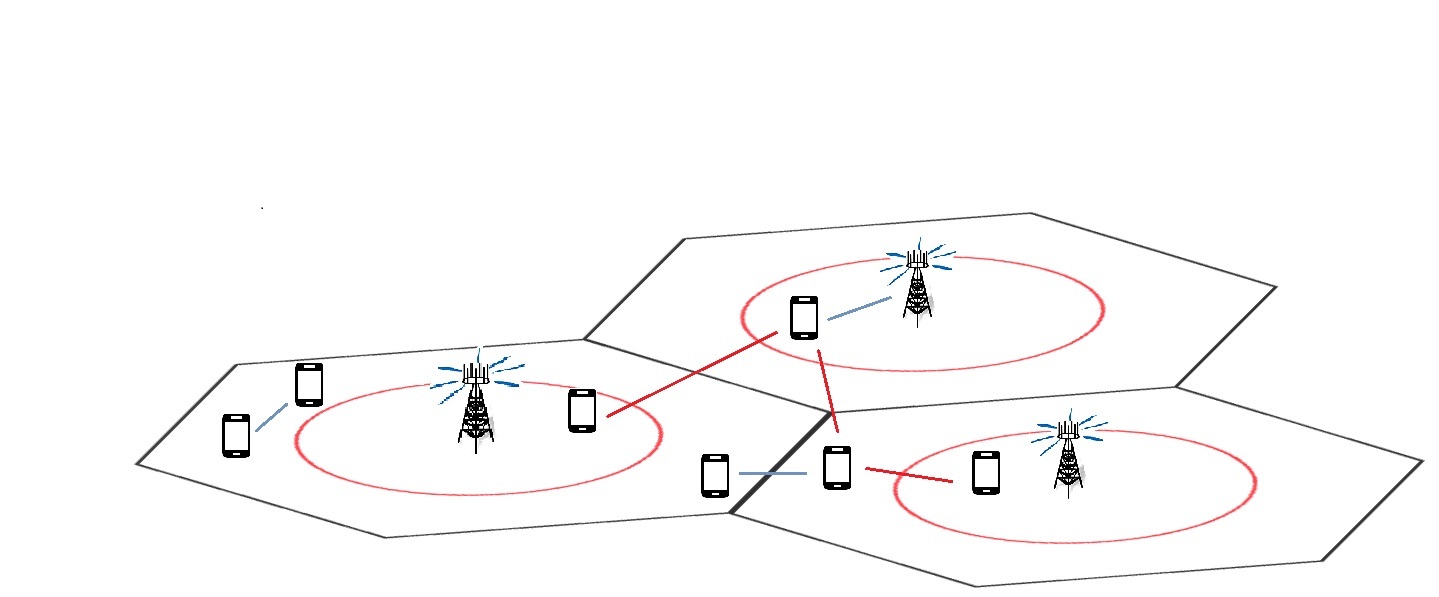}
	\label{Network Model}
		\caption{Network model}
\end{figure}

The cellular and D2D users locations are modeled as a Poisson Point Process (PPP) denoted $\Phi$ with density $\lambda$  . In our framework the users will select either cellular mode or D2D mode based on their distance from the serving Base station. The Base stations follow also a PPP  $\Phi_b$with density $\lambda_b$.
Each Base station will be serving users within a cell radius of $R_c$ and each Base station will be equipped with $M$ antennas.  
The D2D and cellular users transmit at  powers $P_d$ and $P_c$ respectively. We define $P_c =C P_d$ where C is a parameter calculated in order to optimize the system performance and $\lambda= a \times \lambda_b$.
Note that in this paper we always consider $ P_c > P_d $.
In fact,while D2D links are  more suited for short range communication the cellular links are utilized for longer range.
In our setting priority is given for the cellular users since they are also expected  to play the role of a relay point for D2D user wishing to communicate through the infrastructure of the cellular network, so the quality of the link of cellular user needs to be high. We can use, instead, some low power access points at the edge of the exclusion zone in order to relay data coming from D2D users to the cellular infrastructure.
This setting  enables us to guarantee the accessibility to the cellular infrastructure using low power links which should increase the energy efficiency of the network and reduce the interference. 
We will focus on the central Base station located at the origin.
The received signal at this base station during data transmission phase will be:

\begin{align} 
 \nonumber Y_c&& = \sum_{b}^{} \sum_{k}^{} \sqrt{P_c} \left\|r_{kb}\right\|^ {\frac{-\alpha}{2}} h_{kb} u_{kb}  +\\
\nonumber &&\sum_{i}^{}  \sqrt{P_d} \left\|r_{i}\right\|^ {\frac{-\alpha}{2}} h_{i} u_{i} + N_c  
\end{align} 

Where $\alpha >2$ denotes the pathloss exponent, $r_{kb}$ and $r_{i}$ denote, respectively, the distance of the kth cellular user in the bth cell and the ith D2D user from the reference Base station.
$h_{kb} \in C^ {M\times1}$ and $h_{i} \in C^ {M\times1}$ denote, respectively, the vector channels from the kth cellular user in the bth cell and the ith D2D user to the reference Base station with i.i.d $CN (0,1)$elements.
$u_{kb}$ and $u_{i}$ denote the zero mean unit variance transmitted data symbols of the kth cellular user in the bth cell and the ith D2D user.
$N_c$ is the additive white Gaussian noise at the reference Base station.

  The received signal at a reference D2D receiver in the reference Base station during the uplink phase is given by:
			\begin{align}
\nonumber {Y_{j}}^{d} &&= \sum_{b}^{} \sum_{k}^{} \sqrt{P_c} \left\|r_{jkb}\right\|^ {\frac{-\alpha}{2}} g_{jkb} u_{kb}  +\\
\nonumber &&\sum_{i}^{}  \sqrt{P_d} \left\|r_{ji}\right\|^ {\frac{-\alpha}{2}} g_{ji} u_{i} + N_d  
		\end{align}
		Where  $r_{jkb}$ and $r_{ji}$ denote, respectively, the distance of the kth cellular user in the bth cell and the ith D2D user from the jth D2D receiver taken here as a reference.
$g_{jkb} \in C^ {1\times1}$ and $g_{ji} \in C^ {1\times1}$ denote, respectively, the vector channels from the kth cellular user in the bth cell and the ith D2D user to the jth D2D receiver.
$N_d \in C^ {1\times1}$ is the additive white Gaussian noise at the jth D2D receiver.
 Gaussian signaling is assumed and all channel coefficients are i.i.d $CN(0,1)$ . 
In this paper we consider MRC  receivers which proved to be sufficient to achieve Massive MIMO gain.
We will focus on the imperfect CSI case where only cellular users channel is estimated.
 The idea developed here is to restraint the location of the cellular users within the cell  which will reduce on average the impact of pilot contamination. In fact, while in the classical network co-pilot users may be located anywhere within the cell,  in our framework these users will be within a circle of radius Re centered on the base station and thus increasing the average distance between co-pilot users.
We consider the Matrix $Q\in C^ {T_p \times T_p}$ containing the pilot sequences that satisfies $Q^*Q= I_{T_p}$ the same pilot sequences are reused in every cell with a reuse factor of 1. In this case the $M \times T_p$ dimensional received signal at the reference Base station during the training phase is:
\begin{align}
Y_{c}^p && = \nonumber \sum_{b}^{} \sum_{k}^{} \sqrt{P_c} \left\|r_{kb}\right\|^ {\frac{-\alpha}{2}} h_{kb} {q_{k}}^*  + \\
&&  \nonumber \quad\sum_{i}^{}  \sqrt{P_d} \left\|r_{i}\right\|^ {\frac{-\alpha}{2}} h_{i} u_{i}^{p*} + N_{c}^p  
\end{align}

with $u_{i}^{p}$ represent the  $T_p \times 1 $ data vector sent by the D2D transmitters during the training phase, $N_{c}^p$ the $M \times T_p$ dimensional additive Gaussian noise at the reference Base station and $q_{k} \in C^ {T_p\times1}$ the pilot vector used by user $k$ in cell $b$.

\section{Average cellular SINR \& average interference on D2D receivers}

Upon receiving the training signal, the Base station will estimate the channel of the kth cellular user in its cell as follow:
$$ \hat{h_{k1}}= Y_{c}^p q_{k}$$ 

The channel estimate in this case depends on whether D2D communication is allowed or not during training phase. In fact while allowing D2D transmitters to be active during this phase may allow us to use the system  resources efficiently it may on the other hand introduce a further  degradation of the channel estimate since, in addition to the usual limiting factor of pilot contamination, further degradation due to active D2D transmitters persist even in a Massive MIMO  system\cite{interplay}.\\
Two expressions of the estimate of the channel of  kth user can be derived:
With muted D2D the channel estimate will be:

 $$ \hat{h_{k1}}=  \sum_{b}^{}  \sqrt{P_c} \left\|r_{kb}\right\|^ {\frac{-\alpha}{2}} h_{kb} + N_{c}^p q_{k}  $$

If D2D communication was active during training phase the channel estimate is:

 $$ \hat{h_{k1}}=  \sum_{b}^{}  \sqrt{P_c} \left\|r_{kb}\right\|^ {\frac{-\alpha}{2}} h_{kb}+ \sum_{i}^{}  \sqrt{P_d} \left\|r_{i}\right\|^ {\frac{-\alpha}{2}} h_{i} u_{i}^{p*} q_{k} + N_{c}^p q_{k}  $$

\subsection{Interference analysis}

In the framework of our system  the reference base station will be subject to  interference coming both from D2D and cellular users. 
The cellular interference  will be generated by  co-pilot users in other cells. This is the usual limiting factor of Massive MIMO system the rest of the cellular interference vanish thanks to the large number of receive antennas at the Base station. Since the cellular interference results from the pilot contamination phenomenon, with a pilot reuse factor of $1$ the cellular interferer will have the same density $\lambda_b$ as the Base stations. \\

D2D interference will persist  if D2D transmission was active during the training phase, but the interfering signal will come from limited areas defined by the exclusion area radius. In fact in each cell the D2D terminals are allowed to be active only within a ring of inner radius Re and outer radius $R_c$. 
Limiting the area on which D2D interferer may span  changes the density of active D2D users since under our system assumptions the D2D users form now a point process referred to as  Poisson Hole Process which is a Cox process \cite{stocha}.\\
Definition 1 (Poisson hole process):\\
let $\phi_m$ and $\phi_n$ two PPPs with respective intensities  $\lambda_m$ and $\lambda_n$ with $\lambda_n >\lambda_m$.
For each $x \in \phi_m$, remove  all points  in $\phi_n \cap b(x,D)$ where $b(x,D)$ is a ball centered at $x$ and with radius $D$.
Then, the remaining points of $\phi_n$ form a Poisson hole process with density $\hat{\lambda_n}=\lambda_n \exp{(-\lambda_m \pi D^2)}$.

 The new density of the D2D users is then given by: 
$$\lambda_d = \lambda \exp{(-\pi \lambda_b Re^2)}$$

for a D2D receiver the interference will come from all cellular users and from the other D2D transmissions.We consider that any D2D receiver only knows the channel to its transmitter.

for a typical  D2D receiver during uplink  the cellular interference will come from cellular user spanning over circles with radius $R_e$  centered on the network Base stations. Then for the  D2D receiver the cellular interferer will have a density of:

$$\lambda_c = \lambda (1- \exp{(-\pi \lambda_b R_e^2)})$$ 
while the density of the D2D interferer is $\lambda_d$

\subsection{Impact of the exclusion zone on channel estimation}

In order to show the impact of the exclusion zone on the CSI acquisition we calculate the mean square error of channel estimation. We can show then, that  with the exclusion zone and deactivated D2D users during the training phase, we are able to reduce the Mean square error in channel estimation we also can see that this error increases when increasing the radius of the exclusion zone.
the mean square error of the channel estimation is given by:
$$ MSE = E(\epsilon_{channel} \epsilon_{channel} ^*) = E((\hat{h_{k1}}-h_{k1}) (\hat{h_{k1}}-h_{k1})^*) $$

with deactivated D2D transmission during training phase the $MSE$ after normalization is:
\begin{align}
\nonumber MSE & = E( (\sum_{b}^{}   \left\|r_{kb}\right\|^ {\frac{-\alpha}{2}} h_{kb} + N_{c}^p q_{k}-h_{k1})^2)\\ 
\nonumber& = M(\frac{1}{Pc}+ \frac{ 2 \pi \lambda_b (2R_c-R_e)^{2-\alpha}}{\alpha-2})	 
\end{align}

If the D2D transmission is active during the training phase the MSE will have an additional term coming from the D2D interference which will reduce the accuracy of the channel estimation. In this case the MSE becomes:

$$MSE = M(\frac{1}{Pc}+ \frac{ 2 \pi \lambda_b (2R_c-R_e)^{2-\alpha}}{\alpha-2}+\frac{  P_d 2 \pi \lambda_d (R_e)^{2-\alpha}}{(\alpha-2) P_c})	 $$

\subsection{Average cellular and D2D SINR }

In order  assess the performance gain generated by the exclusion region framework we will investigate the average SINR of a reference user located at the reference cell.
With the assumption that this user (k) is in cellular mode we have after receiving the uplink signal and using a MRC receiver\cite{interplay}:
\begin{align}
\nonumber&\lim_{M\to\infty}   \frac{1}{M} (Y_{c}^p q_{k})^* \times Y_{c} = \\
\nonumber&\sum_{b}^{}  P_c \left\|r_{kb}\right\|^ {{-\alpha}}  u_{kb}  + \sum_{i}^{}  P_d \left\|r_{i}\right\|^ {{-\alpha}} h_{i} (u_{i}^{p} q_{k})^* u_{i} 
\end{align}

Conditioned on the position of user $k$ taken here as a reference user we can have the following average SINR expression:

$$E (SINR_{k1}| r_{k1})  =\frac{{P_c}^2 \left\|r_{k1}\right\|^ {{-2 \alpha}}}{ \frac{{P_c}^2 2 \pi \lambda_b (2 R_c-R_e)^{2-2\alpha}}{2 \alpha-2}+\frac{{P_d}^2 2 \pi \lambda_d (R_e)^{2-2\alpha}}{2 \alpha-2} }$$

\begin{proof}
 See Appendix A.

\end{proof}
 We notice the persistence of the D2D interference even with an infinite number of antennas. This due to the fact that D2D interference is active during the training phase \cite{interplay}.
One way to remove the D2D interference is to mute D2D communication during training phase.In this case the average reference user SINR will become:

$$E (SINR_{k1}| r_{k1})  =\frac{{P_c}^2 \left\|r_{k1}\right\|^ {{-2 \alpha}}}{ \frac{{P_c}^2 2 \pi \lambda_b (2 R_c-R_e)^{2-2\alpha}}{2 \alpha-2} }$$

For computing the average SINR of a typical D2D receiver we consider a user in D2D mode  located at a distance $d$ from the reference base station and having a distance $r_{j}$ from its transmitter. 
for a typical D2D receiver the average SINR conditioned on its position is :
\begin{align}
\nonumber& E(SINR_{j}^{d}| r_{j},d) \\
\nonumber&= \frac{{P_d} \left\|d \right\|^ {{-\alpha}}}{ \sigma_{D2D}^2+\frac{{P_c}2 \pi \lambda_c (d-Re)^{2-\alpha}}{ \alpha-2}+\frac{{P_d} 2 \pi \lambda_d (r_0)^{2-\alpha}}{\alpha-2}}
\end{align}

\section{ Optimizing the cellular user average SINR}

Since the  major shortcoming of D2D underlaid cellular network is the cross tier interference. Optimizing the respective cellular and D2D transmission powers and the exclusion zone radius should enable us to extract the full potential of our system setting.
We propose an optimization problem in which transmission power and the radius of the exclusion zone are optimized.
We concentrated on the average interference and average SINR because we don't need perfect CSI in this case.
In fact while the usual Power Control problems require perfect knowledge of the user's channels, our problem only deals with the average interference which can be computed without exact knowledge of the channels. This is quite practical, since acquiring  channel information of all active users, both cellular and D2D, in the cell will cause a reduction in the data transmission phase.

Only basic parameters of the network are needed to find the optimal solution in our problem.
Taking $X= (R_e,C)$ as the vector of optimization variables, $f(X)= E(SINR_{ref}|r_{ref})$  and $g(X) =\frac{{P_c}2 \pi \lambda_c (d-Re)^{2-\alpha}}{ \alpha-2}+\frac{{P_d} 2 \pi \lambda_d (r_0)^{2-\alpha}}{\alpha-2}$
 we can formulate the optimization problem as:

\begin{equation*}
\begin{aligned}
& \underset{X}{\text{maximize}}
& & f(X) \\
& \text{subject to}
& & g(X) \leq I_{D2D}
\end{aligned}
\end{equation*}

Our optimization problem enable us to couple the exclusion region radius optimization with a power control problem allowing effective interference mitigation both for the D2D and cellular communication.
Although the optimization problem is not convex, the solution can be derived with zero duality gap.\\

\begin{lem}
The optimal solution of the proposed optimization problem is found using  KKT CONDITIONS. And it is given by the two following equations:
\begin{align}
\nonumber&\frac{4C \pi \lambda_d \left\|r_{k1}\right\|^ {{-2 \alpha}}(R_e)^{2-2\alpha}}{D^2(2 \alpha-2)} - \beta P_d \frac{2 \pi \lambda_c (d-Re)^{2-\alpha}}{ \alpha-2} =0\\
\nonumber&-\frac{C^2 2\pi}{2\alpha-2} \left\|r_{k1}\right\|^ {{-2 \alpha}}  [(2\alpha-2) \lambda_b C^2 (2R_c -R_e)^{1-2\alpha} + \lambda\\
\nonumber& \exp{(-\frac{R_e}{R_c})^2} ((2-2\alpha) R_e^{1-2\alpha}-\frac{2 R_e^{3-2\alpha}}{R_c^2})] - \beta D^2\\
\nonumber&[\frac{P_d 2 \pi }{\alpha-2} \times (C((\alpha-2) (d-R_e)^{1-\alpha}\lambda_c+ 2 \frac{R_e}{R_c^2}\\
\nonumber&(d-R_e)^{2-\alpha} \lambda \exp{(-\frac{R_e}{R_c})^2}) -\lambda_d r_0^{2-\alpha} 2 \frac{R_e}{R_c^2} )]=0
\end{align}
where $D= \frac{{C}^2 2 \pi \lambda_b (2 R_c-R_e)^{2-2\alpha}}{2 \alpha-2}+\frac{ 2 \pi \lambda_d (R_e)^{2-2\alpha}}{2 \alpha-2}$ and $\beta$ is the Lagrangian multiplier.
\end{lem}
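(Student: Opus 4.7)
The plan is to form the Lagrangian $L(R_e,C,\beta) = f(R_e,C) - \beta(g(R_e,C) - I_{D2D})$ and impose the KKT stationarity conditions $\partial L/\partial C = 0$ and $\partial L/\partial R_e = 0$ together with primal feasibility, dual feasibility $\beta \geq 0$, and complementary slackness. The excerpt has already argued zero duality gap despite non-convexity, so these conditions are sufficient for optimality. The two displayed equations are then precisely the two stationarity conditions, where in the $R_e$-equation I clear the $1/D^2$ factor arising from $\partial f/\partial R_e$ by multiplying through by $D^2$.

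For the $C$-derivative, I first rewrite $f$ via $P_c = CP_d$ as
\begin{equation*}
f(R_e,C) = \frac{C^2 \|r_{k1}\|^{-2\alpha}}{D}, \quad D = \frac{C^2\, 2\pi\lambda_b (2R_c-R_e)^{2-2\alpha}}{2\alpha-2} + \frac{2\pi\lambda_d R_e^{2-2\alpha}}{2\alpha-2}.
\end{equation*}
The key observation is that the $\lambda_b$-term in $D$ carries the same $C^2$ as the numerator, so the quotient rule produces a clean cancellation and $\partial f/\partial C$ collapses to the $\lambda_d$-contribution $\tfrac{4C\pi\lambda_d R_e^{2-2\alpha}\|r_{k1}\|^{-2\alpha}}{D^2(2\alpha-2)}$. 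The $C$-derivative of $g$ picks up only the cellular cross-tier term, contributing $P_d\cdot 2\pi\lambda_c(d-R_e)^{2-\alpha}/(\alpha-2)$. Subtracting these gives the first stated equation.

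The $R_e$-derivative is the main technical step, requiring careful bookkeeping of four $R_e$-dependences: (i) $(2R_c-R_e)^{2-2\alpha}$ inside $D$, (ii) $(d-R_e)^{2-\alpha}$ inside $g$, (iii) the active D2D density $\lambda_d = \lambda\exp(-\pi\lambda_b R_e^2) = \lambda\exp(-(R_e/R_c)^2)$ (which implicitly uses $\pi\lambda_b R_c^2 = 1$) that appears both in $D$ and in the $r_0$-term of $g$, and (iv) the cellular interferer density $\lambda_c = \lambda(1-\exp(-(R_e/R_c)^2))$ in $g$. The hard part is the product-rule cascade generated by differentiating $\lambda_d$ and $\lambda_c$: each contributes a factor $2R_e/R_c^2$ with opposite signs, which is precisely what produces the $2R_e^{3-2\alpha}/R_c^2$ term inside $D$'s derivative and the mixed $(2R_e/R_c^2)(d-R_e)^{2-\alpha}$ term inside $g$'s derivative. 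Assembling everything into $-C^2\|r_{k1}\|^{-2\alpha}\,\partial D/\partial R_e - \beta D^2\, \partial g/\partial R_e = 0$ yields the second stated equation.

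To close, I would observe that the D2D constraint must bind at the optimum: for fixed $R_e$, $f$ is strictly increasing in $C$, so any interior optimum would push $C$ to drive $g$ up to $I_{D2D}$. Hence $\beta > 0$ and complementary slackness reduces to $g(X) = I_{D2D}$, which together with the two stationarity equations forms a system of three equations in the three unknowns $(R_e, C, \beta)$ characterizing the optimal solution.
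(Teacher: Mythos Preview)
Your proposal and the paper's proof address complementary halves of the argument, and the half you skip is actually the substantive one.

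The paper's Appendix B is devoted almost entirely to justifying \emph{why} the KKT conditions characterize the optimum, not to deriving the two displayed equations. Concretely, the paper (i) shows $f$ is strictly quasi-concave by verifying that the Hessian of $\log f(X)$ is negative semi-definite (computing all four second partials in $C$ and $R_e$), (ii) shows $g$ is quasi-convex by checking $\nabla g > 0$ componentwise, and (iii) invokes two Arrow--Enthoven propositions on quasi-concave programming to conclude that the KKT point is optimal. The sentence you lean on---``the solution can be derived with zero duality gap''---is a forward-looking claim that the Lemma's proof is meant to substantiate, not a fact already in hand. Without the quasi-concavity/Arrow--Enthoven step, stationarity of the Lagrangian gives only a candidate, and your final paragraph (constraint binds, three equations in three unknowns) does not by itself rule out that this candidate is a saddle or a local rather than global maximum.

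That said, your explicit differentiation---the $C^2$-cancellation in $\partial f/\partial C$, the product-rule bookkeeping on $\lambda_d=\lambda\exp(-(R_e/R_c)^2)$ and $\lambda_c$, and the $D^2$-clearing to reach the second equation---is exactly the mechanical content the paper omits. So a complete proof would graft your derivative calculation onto the paper's quasi-concavity verification and Arrow--Enthoven invocation; neither piece alone closes the argument.
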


\begin{proof}
 See Appendix B.

\end{proof}

\section{Numerical Results}
In this section we  present some numerical results to validate the derived analytical model  and to investigate the impact of the exclusion zone frame-work on the system performances depending on the system parameters.
 First we need to validate our analytical model. To this end we compare the simulation results for the average received interference at the Base station and mean square error in CSI acquisition with the results derived from the developed expressions of these two quantities. 
In our simulations we consider $31$ hexagonal cells with a side length $Rc=1Km$. The resulting Base station density can be approximated by $\lambda_b= \frac{1}{\pi R_c^2}$. for user density we consider $\lambda=a \lambda_b$. The exclusion region radius $R_e$ will vary from $\left[R_{emin},R_{eman}\right]$. We suppose that the reference user will be located at a distance $d=200m$ from its serving Base Station.\\

\begin{center}
\begin{tabular}{|c|l|c|l|}
\hline
$\alpha$ & $3$ & $ \lambda_b$ & $\frac{1}{\pi R_c^2}$\\
\hline
$R_c$& $1 Km$ & $\lambda$  & $150 \lambda_b$\\
\hline
$R_e$ & $ [0.4Km,0.9Km] $&  $\sigma^2 $&$ 10^-3$\\
\hline
\end{tabular}
\captionof{table}{Simulation Parameters}
\label{tab1}
\end{center}

In the simulation, we generate mobile users according to a PPP with density $\lambda$ These users will be randomly located in the network. Transmit power will be allocated to users according to there locations. The simulation scenario consider pilot aided channel estimation. The pilot reuse factor is 1. After channel estimation, data detection is performed.  
The received signal at the base station is averaged after 10000 iteration and measurements will be done for different values of $R_e$.  

\begin{figure}[h]
	\centering
		\includegraphics[width=7cm]{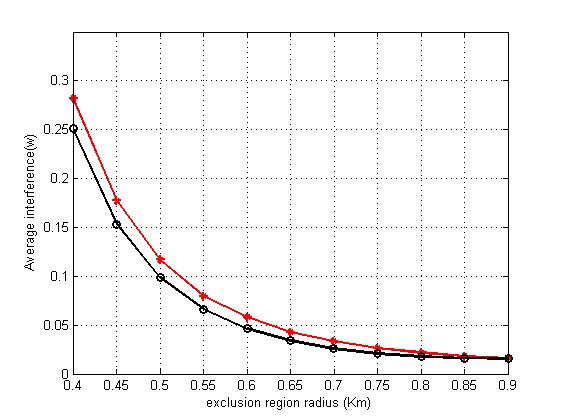}
	\label{Average interference}
		\caption{Average received interference at the reference BS }
\end{figure}

We can see that the analytical and simulation results match. The two curves represent respectively the analytical average received interference at the base station given by the denominator of the cellular user average SINR and the measured value after simulation.

For the MSE we compare the simulation curve with the analytical expression and with the MSE curve when no exclusion zone is implemented. When $R_e=R_c$ i.e the cellular user can be anywhere within the cell, the MSE curve represent the classical case without exclusion zone. We notice that using our system setting the MSE of channel estimation remains below the MSE of the classical case and they are only equal when  $R_e=R_c$.

\begin{figure}[h!]
	\centering
		\includegraphics[width=7cm]{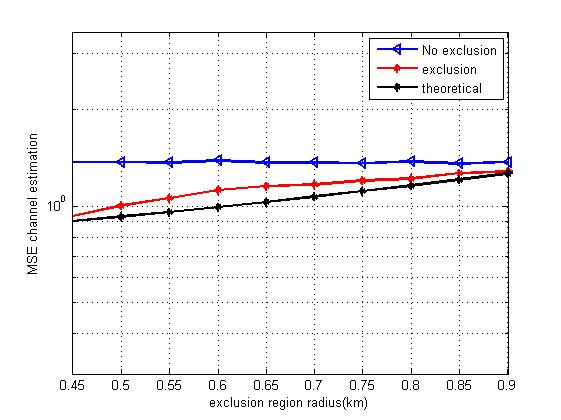}
	\label{Mean square Error CSI}
	\caption{Normalized Mean square Error in reference user channel estimation }
\end{figure}

The next figure shows the average SINR curve evolution as a function of$R_e$ for different user densities.
User density is an important parameter in the system. In fact while the cellular interferer always have as density $\lambda_b$ ,
the D2D interferer density increase while increasing $\lambda$

\begin{figure}[h]
	\centering
		\includegraphics[width=7cm]{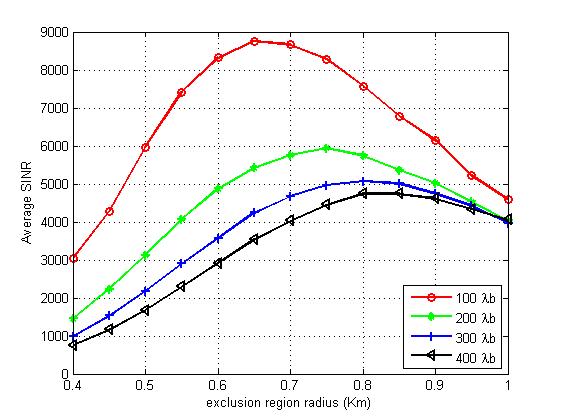}
	\label{fig:different densities}
	\caption{Average cellular user SINR for different users densities}
\end{figure}

We present, in table II, the optimal values  of the optimization variables for different interference thresholds on a reference D2D receiver located at $0.95 Km$ from the reference Base Station for a D2D transmission power of $16 dBm$ and a $C_{max} =10$ .
\begin{center}
\begin{tabular}{|c|l|c|l|}
\hline
$I_{D2D} (W)$& $12$ & $ 15$ & $18$\\
\hline
$R_e (Km)$& $0.7465 $ & $0.5994 $ & $0.6924 $\\
\hline
$C$ & $ 2.2176 $&  $8.6196 $&$ 10$\\
\hline
\end{tabular}
\captionof{table}{Solutions for different $I_{D2D}$}
\label{tab2}
\end{center}

\section{Conclusion}

In this paper, we investigated the interplay between  Massive MIMO and D2D underlaid cellular network where the mobile users choose their operating mode depending on their location. Based  on the exclusion zone framework  we investigated the average  user SINR  and the  mean square error in channel estimation.  We have shown that this system setting enables the reduction of the impact of pilot contamination. We also developed an analytical  model  that allow the optimization of the exclusion region radius and cellular user power in order to optimize the network performance. Although, the proposed framework allow the coexistence of cellular mode and D2D mode while minimizing the inter-tier interference, the fairness criterion should be investigated especially in terms of energy consumption.

\section{Appendix}

A. proof for average SINR

The average D2D interference at the reference Base Station is given by:
\begin{align}
 \nonumber&E\left(\left\|\sum_{i}^{}  P_d \left\|r_{i}\right\|^ {{-\alpha}} h_{i} (u_{i}^{p} q_{k})^* u_{i}\right\|^2 \right)= \\
\nonumber& E\left(\sum_{i}^{} P_d^2 \left\|r_{i}\right\|^ {{-2*\alpha}} \right)=\\
\nonumber& P_d^2 2 \pi \lambda_d  \int_{R_e}^{\infty} r^{1-2\alpha} \partial{r}=\\
\nonumber& \frac{{P_d}^2 2 \pi \lambda_d (R_e)^{2-2\alpha}}{2 \alpha-2}
\end{align}
The integral is due to Campbell's formula \cite{stocha}.
In a similar way we can derive the expression of the average cellular interference:
\begin{align}
 \nonumber&E\left(\left\|\sum_{b}^{} \sum_{k}^{} P_c \left\|r_{kb}\right\|^ {{-\alpha}}  u_{kb} \right\|^2 \right) = \\
\nonumber& E\left(\sum_{b}^{}  P_c^2 \left\|r_{kb}\right\|^ {{-2\alpha}}  \right)=\\
\nonumber& P_c^2 2 \pi \lambda_b  \int_{2R_c-R_e}^{\infty} r^{1-2\alpha} \partial{r}=\\
\nonumber& \frac{{P_c}^2 2 \pi \lambda_b (2 R_c-R_e)^{2-2\alpha}}{2 \alpha-2}
\end{align}

B. Proof of Lemma 1:

To prove Lemma 1 we use two propositions given by Arrow and Enthoven in \cite{Quasi} for quasi-concave optimization problems.
First we need to prove that the interference to which a typical D2D receiver is subject is a quasi convex function of the vector of the optimization variables and that the Objective function is also quasi concave as a function of the optimization variables.

In order to prove the strict quasi-concavity of the objective function it is sufficient to prove the strict concavity of  $ \Psi(X) = Log (f(X))$ \cite{boyd}
We compute the hessian $ H_{\Psi}$of $ \Psi(X) = Log (f(X))$ 
$$ H_{\Psi} =\nabla^2 \Psi(X) $$
taking $\lambda_b= \frac{1}{\pi R_c^2}$, $P_c= C \times P_d$ and $\lambda= a \lambda_b$ we have:
\begin{align} 
\nonumber \Psi(X) &=Log( 	C^2 R_c^2 (\alpha-1) \left\| r_{k1}\right\|^ {-2\alpha})-\\
\nonumber & Log (C^2 (2R_c -R_e)^{2-2\alpha}+a R_e^{2-2\alpha} \exp{(-\frac{R_e}{R_c})^2})
\end{align}

\begin{align}
\nonumber \frac{\partial^2 \Psi}{\partial^2 C}&= \frac{-2}{C^2} - \frac{2 (2R_c -R_e)^{2-2\alpha}}{C^2 (2R_c -R_e)^{2-2\alpha}+a  \nonumber R_e^{2-2\alpha} \exp{(-\frac{R_e}{R_c})^2}} +\\
\nonumber & \frac{4 C^2 (2R_c -R_e)^{4-4\alpha}}{(C^2 (2R_c -R_e)^{2-2\alpha}+a R_e^{2-2\alpha} \exp{(-\frac{R_e}{R_c})^2})^2}
\end{align}

$$\frac{\partial^2 \Psi}{\partial^2 C} < 0$$

$$\frac{\partial^2 \Psi}{\partial^2 R_e}= \frac {-A \times D + E^2} {D^2}$$
with 
\begin{align}
\nonumber A&=(2\alpha-2)(2\alpha-1)  C^2 (2R_c -R_e)^{-2\alpha} + a \exp{(-\frac{R_e}{R_c})^2} \times\\
\nonumber & ( (1-2\alpha)(2-2\alpha) R_e^{-2\alpha} -(2-2\alpha) 2 \frac{R_e^{2-2\alpha} }{R_c^2}-\\
\nonumber &(3-2\alpha) 2 \frac{R_e^{2-2\alpha} }{R_c^2}-\frac{4 R_e^{4-2\alpha} }{R_c^4}) \\
\nonumber E&=(2\alpha-2)  C^2 (2R_c -R_e)^{1-2\alpha} + \\
\nonumber &a \exp{(-\frac{R_e}{R_c})^2} ((2-2\alpha) R_e^{1-2\alpha}-\frac{2 R_e^{3-2\alpha}}{R_c^2})\\
\nonumber D&=C^2 (2R_c -R_e)^{2-2\alpha}+a R_e^{2-2\alpha} \exp{(-\frac{R_e}{R_c})^2}
\end{align}

$$\frac{\partial^2 \Psi}{\partial^2 R_e} < 0$$

\begin{align}
\nonumber & \frac{\partial^2 \Psi}{\partial C \partial R_e} =\\
\nonumber & \frac{-D\times (2\alpha-2) 2 C (2R_c -R_e)^{1-2\alpha} + E \times 2 C (2R_c -R_e)^{2-2\alpha} }{D^2}
\end{align}
$$\frac{\partial^2 \Psi}{\partial C \partial R_e} < 0$$

\begin{align}
\nonumber & \frac{\partial^2 \Psi}{ \partial R_e \partial C} =\\
\nonumber &\frac{- D \times ((2\alpha-2)  C 2 (2R_c -R_e)^{1-2\alpha})+E \times(2 C (2R_c -R_e)^{2-2\alpha})}{D^2}
\end{align}

$$\frac{\partial^2 \Psi}{ \partial R_e \partial C} < 0$$

In addition we have  $\frac{\partial^2 \Psi}{ \partial R_e \partial C} > \frac{\partial^2 \Psi}{\partial^2 R_e} $ and 
 $\frac{\partial^2 \Psi}{ \partial R_e \partial C} > \frac{\partial^2 \Psi}{\partial^2 C} $
 Then $ H_{\Psi}$ is a negative semi-definite matrix which proves the strict quasi-concavity of $f(X)$

For the constraint $g(X)$ it is sufficient to show that the interference is an increasing function of the exclusion radius $R_e$
in fact

\begin{align}
\nonumber\frac{\partial g}{\partial R_e} &=\\
\nonumber & \frac{P_d 2 \pi \lambda}{\alpha-2} \times (C((\alpha-2) (d-R_e)^{1-\alpha} (1-\exp{(-\frac{R_e}{R_c})^2}) +\\
\nonumber & 2 \frac{R_e}{R_c^2}(d-R_e)^{2-\alpha} \exp{(-\frac{R_e}{R_c})^2}) - r_0^{2-\alpha} 2 \frac{R_e}{R_c^2} )
\end{align}
we can see that this derivative is positive in the interesting regime where $ P_c > P_d$\\
$\frac{\partial g}{\partial R_e} > 0$ iff $C > C_{min}$
$$\frac{\partial g}{\partial C} > 0$$
 
Since $\nabla g (X) > 0 \;\;$, $g (X)$ is a quasi-convex function \\
\\

Now, we use the following propositions:\\

Proposition 1 (Arrow and Enthoven):
Assume that $g_1, . . . , g_m$ are quasi-concave functions and that the following

regularity conditions hold:
\begin{enumerate}
\item Slater condition is verified : $ \exists X_0 \in C^2  \; such that \; g_i(X_0) > 0 \; \forall i$
\item for each i  $g_i$ is either concave or $\nabla g_i (x) \neq 0 $for each feasible solution of the optimization problem 
\end{enumerate}
Then if $ X^* $ is a local optimal solution of the optimization problem then $\exists \gamma^* $ such that with $(X^*,\gamma^*)$ 
the Kuhn-Tucker conditions hold.

Proposition 2 (Arrow and Enthoven):
If in addition the objective function f is twice differentiable on the feasible set and additionally $\nabla f(X^*) \neq 0$ holds then  $X^*$ is an optimal solution of the optimization problem.\\

It is clear that our optimization problem satisfies the optimality conditions stated in the two propositions which means that the optimal exclusion zone radius and cellular power can be found using KKT conditions.




\begin{thebibliography}{12}


\bibitem{interplay}
Xingqin Lin, Robert W. Heath Jr and  Jeffrey G. Andrews, \emph{The Interplay between Massive MIMO and Underlaid D2D Networking}, IEEE Transactions on Wireless Communications, February 2015.
	
\bibitem{Noncooperative}
T. L. Marzetta, , \emph{Noncooperative cellular wireless with unlimited numbers of base station antennas}, IEEE Transactions on Wireless Communications, vol. 9, no. 11, pp. 3590-3600, November 2010.

\bibitem{overview}
X. Lin, J. G. Andrews, A. Ghosh, and R. Ratasuk \emph{An overview
on 3GPP device-to-device proximity services}, IEEE Communications
Mag., Apr. 2014.

\bibitem{survey}
A. Asadi, Q. Wang, and V. Mancuso \emph{A survey on device-to-device
communication in cellular networks}, IEEE Communications Surveys
and Tutorials, 2013.
\bibitem{blind}
Muller, R.R.,Cottatellucci, L. and Vehkapera, M.\emph{Blind Pilot Decontamination}, IEEE Journal of  Selected Topics in Signal Processing, 2014. 

\bibitem{pilots}
Haifan Yin, David Gesbert, Miltiades C. Filippou, Yingzhuang Liu\emph{Decontaminating Pilots in Massive MIMO Systems}, in proc. of IEEE International Conference on Communications ICC 2013. 



\bibitem{Quasi}
K.J. Arrow and A.C. Enthoven. \emph{Quasi-concave programming}, Econometrica,
29:779-800, 1961.

\bibitem {boyd}
Boyd and Vandenberghe \emph{Convex Optimization}, Cambridge university press, 2004.

\bibitem{stocha}
M.Haenggi, \emph{Stochastic geometry for wireless networks}, Cambridge University Press, 2012

\bibitem{mohamad}
 S. Lakshminarayana, M. Assaad and M. Debbah \emph{Coordinated Multi-cell Beamforming for Massive MIMO: A Random Matrix Approach}, to appear in IEEE Transactions on Information Theory.
\bibitem{mohamad2}
S. Lakshminarayana, J. Hoydis, M. Deabbah and M. Assaad, \emph{Asymptotic Analysis of Distributed Multi-cell Beamforming}, proc. of IEEE PIMRC 2010, Istanbul, Turkey, Sep. 2010.

\bibitem{Gesbert1}
 H. Yin, L. Cottatellucci, and D. Gesbert, \emph{Enabling massive MIMO systems in the FDD mode thanks to D2D communications}, 48th Asilomar Conference on Signals, Systems, and Computers, November 2-5, 2014,
Pacific Grove, CA, USA
\bibitem{Gesbert2}
Samah A. M. Ghanem and Laura Cottatellucci, \emph{D2D Cooperation to Avoid Instantaneous Feedback in Nonreciprocal Massive MIMO Systems}, IEEE COMSOC MMTC e-Letter, special issue on Novel approaches and
technologies for 5G, 2015
\bibitem{Guard}
Junhong Ye, Ying Jun Zhang, \emph{A Guard Zone Based Scalable Mode Selection
Scheme in D2D Underlaid Cellular Networks}, proc. of IEEE PIMRC 2010, Istanbul, Turkey, Sep. 2010.


\end{thebibliography}
\end{document}